\documentclass[a4paper,12pt]{article}
\usepackage{amsmath}
\usepackage{amssymb}
\usepackage{amsthm}
\usepackage{amsfonts}
\usepackage{mathrsfs}
\usepackage{graphicx}
\usepackage{hyperref}
\usepackage{float}
\usepackage{slashed}
\usepackage{cite}
\usepackage{xcolor}
\usepackage{authblk}

%%%%%%%%%%%%%%%%%%%%%%%%%%%%%%%

\newtheorem{theorem}{Theorem}
\newtheorem{lemma}{Lemma}

%%-set up for A4-%%
\textwidth  = 16truecm \textheight = 24truecm

\hoffset = -1truecm \voffset = -2truecm
\numberwithin{equation}{section}

\newtheoremstyle{named}{}{}{\itshape}{}{\bfseries}{.}{.5em}{\thmnote{#3 }#1}
\theoremstyle{named}

\hypersetup{ pdftitle={Report 2},
	pdfauthor={Mulyanto}, % tuliskan nama
	pdfkeywords={}, % tuliskan kata kunci
	bookmarksnumbered, pdfstartview={FitH}, urlcolor=blue, }

%%%%%%%%%%%%%%%%%%%%%%%%%%%%%%%%%%%%%%%%%%%%%%%%%%%%%%%%%%%%%%%%%%%%%%%%%%%%%%%%%%%%%%%%%%%%

%%%%%%%%%%%%%%%%%%%%%%%%%%%%%%%%%%%%%%%%%%%%%%%%%%%%%%%%%%%%%%%%%%%%%%%%%%%%%%%%%%%%%%%%%%%%

\begin{document}
	\pagestyle{plain}

\title{\LARGE\textbf{Positive Mass in Scalar-Torsion Holography}}

\author[1]{Mulyanto}
\author[1]{Hadyan L. Prihadi}
\author[1]{Emir S. Fadhilla}
\author[1]{Ardian N. Atmaja}

\affil[1]{\textit{\small Research Center for Quantum Physics, National Research and Innovation Agency (BRIN),
Kompleks PUSPIPTEK Serpong, Tangerang 15310, Indonesia}}
%\affil[2]{\textit{\small Theoretical Physics Laboratory,
%	Theoretical High Energy Physics Research Division,
%	Faculty of Mathematics and Natural Sciences,
%	Institut Teknologi Bandung
%	Jl. Ganesha no. 10 Bandung, Indonesia, 40132}}
%   \date{\small email: muly031@brin.go.id}

\maketitle

\begin{abstract}
    We investigate the holographic renormalization of scalar-torsion gravity in a four-dimensional bulk spacetime with non-minimal derivative coupling. The asymptotic behavior of the static equations leads to an anti-de Sitter geometry for negative cosmological constants, allowing for a holographic interpretation via the AdS/CFT correspondence. The existence of unique solutions throughout the bulk is addressed. We study the effect of the non-minimal coupling parameter on the conformal dimension and the expectation value of the dual scalar operator, showing that the effective bulk mass can be tuned through the non-minimal coupling. Our results provide a formalism for finding non-tachyonic bulk scalar fields which vanish at the boundary.
\end{abstract}
\section{Introduction}
The search for a deeper understanding of the fundamental nature of gravity continues to inspire the development of theories that build upon or deviate from the principles of General Relativity (GR). Among these, teleparallel gravity and its generalizations have gained considerable attention as potential frameworks for describing gravitational interaction. Unlike GR, which attributes gravity to spacetime curvature, teleparallel theories formulate gravity in terms of torsion, using the Weitzenböck connection that is curvature-free but torsion-rich \cite{haya, arcos, maluf}. Teleparallel gravity theories have attracted significant interest over the years, mainly because of their potential applications in cosmological studies \cite{Bengochea:2008gz, Linder:2010py, Cai:2015emx}. One particularly compelling extension of teleparallel gravity is scalar-torsion theory. A common feature of these theories is the presence of one or more scalar fields. These scalar fields are typically coupled with the spacetime metric. This coupling is typically non-minimal \cite{hofman}.

One of the main challenges in generalized gravity theories is verifying whether they reduce to General Relativity (GR) under suitable conditions. Recent studies \cite{far2018, Habib:2018} have demonstrated that in teleparallel gravity, gravitational waves possess the same polarization modes as those predicted by GR, assuming the boundary term is minimally coupled to both the torsion scalar and the scalar field. Similarly, in the context of $f(T)$ gravity, the behavior of gravitational waves remains aligned with the predictions of Einstein’s theory \cite{bamba2013, capo2020}. Therefore, this theory should also demonstrate consistency, similar to General Relativity, when examined through other perspectives, such as the AdS/CFT correspondence.

While the theory has yielded significant insights in cosmological applications, investigating static solutions—especially black hole configurations—is equally crucial for assessing its consistency with General Relativity. Numerous works have established the existence of static, spherically symmetric black hole solutions in four-dimensional spacetime under this theoretical framework \cite{Wang:2011xf, Boehmer:2011gw, Ferraro:2011ks}. In \cite{kofinas2015torsi}, a broader class of black hole solutions was obtained for particular values of integration constants, featuring asymptotic hyperscaling behavior that breaks Lifshitz spatial symmetry and possesses a spherical horizon topology. At spatial infinity, the solution also approaches an asymptotically Anti-de Sitter (AdS) spacetime. However, the findings in \cite{kofinas2015torsi} are questionable, as they rely on the same master equation used in \cite{kofinas2012torsi}, which was later shown to contain a computational error, as demonstrated in \cite{Yaqin2017comment}. Nevertheless, the result is still noteworthy because scalar torsion theory asymptotically approaches AdS spacetime. This allowed us to make a comparison with general relativity within the context of AdS/CFT correspondence. 

The AdS/CFT correspondence \cite{Maldacena1999, GKP1998, Witten1998} tells us that an asymptotically anti-de Sitter theory has a conformal field theory dual in its boundary. A scalar field $\phi$ in the bulk is a source to the dual scalar field $\mathcal{O}$ in the CFT at the boundary. This source generates a deformation in the boundary theory. The response of this deformation is encoded in the 1-point function $\langle\mathcal{O}\rangle$ \cite{deHaro2001}. The partition function $Z_{\text{CFT}}$ of the CFT theory is equal to the classical partition function $Z_{\text{AdS}}$ of the bulk gravitational theory in a semi-classical limit. In this paper, we demonstrate that the scalar torsion theory with nonminimal derivative couplings asymptotically approaches AdS spacetimes, allowing for the implementation of holographic renormalization. We calculate the 1-point function $\langle\mathcal{O}\rangle$ using the renormalized bulk action with an asymptotically AdS gravitational solution. We also see how this 1-point function gets influenced by the bulk non-minimal derivative coupling.

The structure of this paper is organized as follows. Section \ref{static} is dedicated to static spacetime solutions, where we demonstrate that the solution approaches the AdS spacetime at spatial infinity. We also address the existence of a unique solution to this theory. In Section \ref{CFT}, we show the CFT dual for the theory and demonstrate the holographic renormalization. Finally, Section \ref{sec:conclusion} provides a summary of our results and suggests potential directions for future research. The local and global existence of the scalar torsion theories with non-minimal derivative couplings are also addressed in Appendix Section \ref{sec:localglobalex}.

\section{Static Spacetimes in Scalar Torsion Theory}\label{static}
The action for the scalar-torsion theory involving a non-minimal derivative coupling and including a cosmological constant $\Lambda$ is given by:
 \begin{eqnarray}\label{S}
     S=-\frac{1}{2\kappa ^{2}}\int{{{d}^{4}}}x~e(T+2\Lambda)-\int{{{d}^{4}}}x~e\left[\left( \frac{1}{2}-\xi T \right){{g}^{\mu \nu }}{{\partial }_{\mu }}\phi {{\partial }_{\nu }}\phi +V\right]~.
 \end{eqnarray}
Here, $e$ represents the determinant of the vielbein, $T$ denotes the torsion as define in \eqref{torsiondef} and $\xi$ denotes the non-minimal coupling parameter. The constant $\kappa \equiv \frac{1}{M_p}$, where $M_p$ is the four-dimensional Planck mass. By varying the action \eqref{S} with respect to the vielbein $e_a$ and the scalar field $\phi$, we obtain the corresponding equations of motion as follows: 
\begin{eqnarray}\label{eom1}
  && \left( \frac{2}{\kappa^{2}}-4\xi {{\phi }_{,\rho }}{{\phi }^{,\rho }} \right)\left[ (e{{S}_{\kappa }}^{\lambda \nu }{{e}_{{\bar{b}}}}^{\kappa }){{,}_{\nu }}{{e}^{{\bar{b}}}}_{\mu }+e\left( \frac{1}{4}T\delta _{\mu }^{\lambda }-{{S}^{\nu \kappa \lambda }}{{T}_{\nu \kappa \mu }}\right) \right] \notag\\ 
 &&+ e\frac{\Lambda}{\kappa^2}\delta^\lambda_\mu +4\xi \left[ \frac{1}{2}eT{{\phi }_{,\mu }}{{\phi }^{,\lambda }}+e{{S}_{\mu }}^{\nu \lambda }({{\phi }_{,\kappa }}{{\phi }^{,\kappa }})_{,\nu}  \right] \notag\\ 
 && +e\left( \frac{1}{2}{{\phi }_{,\rho }}{{\phi }^{,\rho }}\delta _{\mu }^{\lambda }-{{\phi }_{,\mu }}{{\phi }^{,\lambda }}+V\delta _{\mu }^{\lambda } \right)=0 ~,
\end{eqnarray}

\begin{eqnarray}\label{eom2}
{{\left[ e(1-2\xi T){{\phi }^{,\mu }} \right]}_{,\mu }}-e\frac{dV}{d\phi }=0~.
\end{eqnarray}
Note that this theoretical framework does not rely on curvature. As a result, the solutions to the equations \eqref{eom1} and \eqref{eom2} are characterized by a spacetime geometry defined by the metric tensor $g_{\mu \nu} = \eta_{ab} e_{\mu}^{a} e_{\nu}^{b}$, which can lead to physical predictions distinct from those of standard general relativity. In this context, it is particularly interesting to study the simplest class of solutions, namely static spacetimes, in order to verify the viability of this alternative gravitational theory.

We begin by considering the following ansatz for the metric:
\begin{eqnarray}\label{metrik}
    d{{s}^{2}}=-N{{(r)}^{2}}d{{t}^{2}}+K{{(r)}^{-2}}d{{r}^{2}}+R{{(r)}^{2}}d{{\Omega }^{2}}~.
\end{eqnarray}
The choice of metric as given above leads to the following form for the associated vielbein corresponding to metric \eqref{metrik}
\begin{eqnarray}
    {{e}^{a}}_{\mu }=(N(r),K{{(r)}^{-1}},R(r){{\hat{e}}^{{\bar{b}}}}_{i})~.
\end{eqnarray}
 As a result, the torsion tensor in this case takes the form
 \begin{eqnarray}\label{torsion}
     T=-2{{K}^{2}}\frac{{{R}'}}{R}\left[ \frac{{{R}'}}{R}+2\frac{{{N}'}}{N} \right]~.
 \end{eqnarray}
By substituting equations \eqref{metrik} and \eqref{torsion} into \eqref{eom1}, and assuming a radial dependence for the scalar field, $\phi = \phi(r)$, we obtain the following set of equations
\begin{eqnarray}\label{eq1}
    &&2\left( \frac{1}{\kappa ^{2}{{K}^{2}}}-2\xi {{{{\phi }'}}^{2}} \right)\left[ 2\frac{{R}'{K}'}{RK}+\frac{{{{{R}'}}^{2}}}{{{R}^{2}}} \right.\left. +2\frac{{{R}''}}{R}-\frac{1}{{{R}^{2}}{{K}^{2}}} \right]\notag\\
   && -16\xi {\phi }'\frac{{{R}'}}{R}\left( \frac{{{K}'}}{K}{\phi }'+{\phi }'' \right)+\frac{1}{{{K}^{2}}}\left( {{{{\phi }'}}^{2}}+\frac{2}{{{K}^{2}}}\left(V+\frac{\Lambda}{\kappa^2}\right) \right)=0~,
\end{eqnarray}
 \begin{eqnarray}\label{eq2}
  & &\left[ \frac{{{R}'}}{R}\left( \frac{{{R}'}}{R}+2\frac{{{N}'}}{N} \right)-\frac{1}{{{R}^{2}}{{K}^{2}}} \right]2\left( 2\xi {{{{\phi }'}}^{2}}-\frac{1}{\kappa ^{2}{{K}^{2}}} \right) \notag\\ 
 && +\left( \frac{{{R}'}}{R}\left( \frac{{{R}'}}{R}+\frac{2{N}'}{N} \right) \right)8\xi {{{{\phi }'}}^{2}}+\frac{1}{{{K}^{2}}}\left( {{{{\phi }'}}^{2}}-\frac{2}{{{K}^{2}}}\left(V+\frac{\Lambda}{\kappa^2}\right) \right)=0~, 
 \end{eqnarray}
\begin{eqnarray}\label{eq3}
  && 2\left( \frac{1}{\kappa ^{2}{{K}^{2}}}-2\xi {{{{\phi }'}}^{2}} \right)\left[ \frac{{{N}'}}{N}\left( \frac{{{R}'}}{R}+\frac{{{K}'}}{K} \right)+\frac{{{R}''}}{R}+\frac{{R}'{K}'}{RK}+\frac{{{N}''}}{N} \right] \notag\\ 
 && -8\xi {\phi }'\left( \frac{{{R}'}}{R}+\frac{{{N}'}}{N} \right)\left( \frac{{{K}'}}{K}{\phi }'+{\phi }'' \right)+\frac{1}{{{K}^{2}}}\left( {{{{\phi }'}}^{2}}+\frac{2}{{{K}^{2}}}\left(V+\frac{\Lambda}{\kappa^2}\right) \right)=0 ~,
\end{eqnarray}
\begin{eqnarray}\label{eq4}
    {\phi }'\left( \frac{{{K}'}}{K}{\phi }'+{\phi }'' \right)=0~.
\end{eqnarray}
In a similar manner, equation \eqref{eom2} in this context becomes
\begin{eqnarray}\label{eq5}
    {\left\{ \left[ 1+4\xi \left( \frac{{{K}^{2}}{R}'}{R}\left[ \frac{{{R}'}}{R}+\frac{2{N}'}{N} \right] \right) \right]  \times N{{R}^{2}}K{\phi }' \right\}}^{\prime }-\frac{N{{R}^{2}}}{K}\frac{dV}{d\phi }=0~.
\end{eqnarray}
The set of field equations (\ref{eq1}-\ref{eq5}) is the system of equations for the metric functions and scalar field within this model, which needs to be solved simultaneously. 

\subsection{Asymptotic Behaviour}\label{asymptoticBehav}
The behaviour of the scalar field, \(\phi\), at the asymptotic boundary, \(r\rightarrow\infty\), is assumed to go to vacuum, i.e. \(\lim_{r\rightarrow\infty}\phi=\phi_{\text{vac}}\) such that \(V(\phi_{\text{vac}})=0\). As such, the resulting Einstein's equations are
\begin{eqnarray}
    &&\left[ 2\frac{{R}'{K}'}{RK}+\frac{{{{{R}'}}^{2}}}{{{R}^{2}}} \right.\left. +2\frac{{{R}''}}{R}-\frac{1}{{{R}^{2}}{{K}^{2}}} \right]+\frac{\Lambda}{{{K}^{2}}}=0,\\
  &&\left[ \frac{{{R}'}}{R}\left( \frac{{{R}'}}{R}+2\frac{{{N}'}}{N} \right)-\frac{1}{{{R}^{2}}{{K}^{2}}} \right]+\frac{\Lambda}{{{K}^{2}}} =0 ,\\
  && \frac{{{N}'}}{N}\left( \frac{{{R}'}}{R}+\frac{{{K}'}}{K} \right)+\frac{{{R}''}}{R}+\frac{{R}'{K}'}{RK}+\frac{{{N}''}}{N}   + \frac{\Lambda}{K^2}=0 .
\end{eqnarray}
The function \(R\) can chosen to be \(r\) by transformation of the radial coordinate, without changing the geometry of the manifold. Let us consider \eqref{eq1}, with \(R=r\). We get
 \begin{equation}
      (K^2)'r+K^2 -1+\Lambda r^2=0,
 \end{equation}
 whose solution is given by
 \begin{equation}
     K^2(r)=1-\frac{\Lambda r^2}{3}.
 \end{equation}
 For \(\Lambda<0\) the vacuum solution is AdS. Now, consider \eqref{eq3} which reads,
 \begin{equation}
     {N}'\left(3-2\Lambda r^2\right) +r{N}''\left(3-\Lambda r^2\right)  + 2\Lambda rN=0,
 \end{equation}
 for \(R=r\). This left us with 
 \begin{equation}
 N^2(r)= 1-\frac{\Lambda r^2}{3},
 \end{equation}
 after substitution of solution for \(K(r)\) and using the Boundary condition from \eqref{eq2}. In conclusion, at \(r\rightarrow\infty\) the metric behaves as
 \begin{equation}\label{solution1}
     ds^2\approx-\left(1-\frac{\Lambda r^2}{3}\right)d{{t}^{2}}+\frac{d{{r}^{2}}}{1-\frac{\Lambda r^2}{3}}+r^2d{{\Omega }^{2}}.
 \end{equation}
 We summarize this behaviour of the metric at the asymptotic boundary as follows: If \(\lim_{r\rightarrow\infty}\phi=\phi_{\text{vac}}\) such that \(V(\phi_{\text{vac}})=0\), then the spacetime, whose metric satisfies (\ref{eq1}-\ref{eq4}),  is asymptotically flat for \(\Lambda=0\), or Anti-de Sitter for \(\Lambda<0\).

 %\subsection{Near-Horizon Behaviour}
%In a vacuum case where \(\phi=\phi_{\text{vac}}\) everywhere, the corresponding spacetime becomes completely AdS and the solutions demonstrated in section \ref{asymptoticBehav} works for all \(r\). However, we can also have a black hole-like structure if there exists finite \(r_h>0\) such that \(K(r_h)=0\). Such a radius is known as the horizon radius.

%If finite \(r_h\) exist, then we can expand \(N,~K,~R\) and \(\phi\) near the horizon as follows
%\begin{eqnarray}
    %N(r)&=&N_h+N_1(r-r_h)+N_2(r-r_h)^2+N_3(r-r_h)^3+\dots,\nonumber\\
   %%% K(r)&=&K_1(r-r_h)+K_2(r-r_h)^2+K_3(r-r_h)^3+\dots,\\
   %% R(r)&=&R_h+R_1(r-r_h)+R_2(r-r_h)^2+R_3(r-r_h)^3+\dots,\nonumber\\
  %  \phi(r,x^{\tilde{\mu}})&=&\phi_h+(r-r_h)\phi_1(x^{\tilde{\mu}})+(r-r_h)^2\phi_2(x^{\tilde{\mu}})+(r-r_h)^3\phi_3(x^{\tilde{\mu}})+\dots~.
%\end{eqnarray}
%From this point, we are going to consider the standard massive scalar case where the potential is simply given by \(V=m^2\phi^2/2\), \(m\) is the mass of the scalar field.

\subsection{Existence of Unique Solutions }\label{masterequ}

Among the equations \eqref{eq1}-\eqref{eq5}, it is expected that we impose a constraint on the theory. In this case we choose the equation \eqref{eq4}, which can be expressed as
 \begin{eqnarray}
     {\phi }'=\frac{\nu }{K}
 \end{eqnarray}
where $\nu $ is a constant. We then introduce new variables $x, y, z$ as well as $X, Y, Z$, detailed calculations can be found in Appendix \ref{sec:localglobalex}. 
Through a series of computations, the field equations can be simplified to a highly nonlinear ordinary differential equation, which we will henceforth call the master equation for $Y(x)$, 
\begin{eqnarray}\label{mastereq}
  && \frac{{{d}^{2}}Y}{d{{x}^{2}}}+Y\frac{dY}{dx}\left\{ -\frac{3}{2}-\frac{3\tilde{\eta }}{4\hat{\eta }}+\frac{6}{\tilde{\eta }\hat{\eta }} \right\}+\frac{dY}{dx}\frac{{{e}^{-2x}}}{{{\nu }^{2}}}\left\{ \frac{1}{4}-\frac{{\hat{\eta }}}{{\tilde{\eta }}} \right\}+{{\left( \frac{dY}{dx} \right)}^{2}}\left\{ -\frac{1}{2}+\frac{{\hat{\eta }}}{{\tilde{\eta }}} \right\} \notag\\ 
 && +3\frac{dY}{dx}+{{Y}^{2}}\left\{ -\frac{9\tilde{\eta }}{4\hat{\eta }}+\frac{9\hat{\eta }}{{\tilde{\eta }}} \right\}+\frac{{{e}^{-2x}}}{{{\nu }^{2}}}Y\left\{ \frac{{\tilde{\eta }}}{{\hat{\eta }}}-\frac{3{{\nu }^{2}}\hat{\eta }}{{\tilde{\eta }}} \right\}+\frac{{{e}^{-4x}}}{{{\nu }^{2}}}\frac{{\hat{\eta }}}{4\tilde{\eta }}+\frac{{{e}^{-2x}}}{{{\nu }^{2}}}=0 \notag\\
\end{eqnarray}
where $\hat{\eta }$ and $\tilde{\eta}$ are constants defined as:
\begin{eqnarray}
  \hat{\eta }&=& 2\xi {{\nu }^{2}}{{\kappa }^{2}}-1 ~,\notag\\ 
  \tilde{\eta }&=&6\xi {{\nu }^{2}}{{\kappa }^{2}}-1  ~.
\end{eqnarray}
It can be demonstrated that the scalar-torsion gravity theory described by the master equation \eqref{mastereq}, which incorporates both non-minimal derivative couplings and a cosmological constant term, admits unique local and global solutions (see Appendix \ref{sec:localglobalex} for details calculation).

\section{CFT Dual of Scalar Torsion Theory with Non-minimal Coupling}\label{CFT}
As seen in subsection \ref{asymptoticBehav}, the asymptotic behavior of the scalar-torsion theory approach Anti-de Sitter solution for $\Lambda<0$. Therefore, we might expect that this theory has a CFT dual in its asymptotic boundary. In the context of the AdS/CFT correspondence, a scalar field $\phi$ that propagates in the bulk corresponds to a source $\phi^{(0)}$ for a deformation in the boundary theory which gives a response $\langle\mathcal{O}\rangle$, were $\mathcal{O}$ is a scalar operator in the CFT theory. This generates a deformed action in the boundary in the form of
\begin{equation}
    \delta S_{\text{CFT}}=\int_{r\rightarrow \infty}d^3x e\phi^{(0)}\langle\mathcal{O}\rangle.
\end{equation}
In this work, we are interested in investigating how the non-minimal coupling $\xi$ affect the expectation value $\langle\mathcal{O}\rangle$ and its conformal dimension in the boundary theory. In this section, we aim to calculate the expectation value $\langle\mathcal{O}\rangle$ from
\begin{equation}\label{1pt1}
    \langle\mathcal{O}\rangle\sim\frac{\delta S_{\text{ren.}}}{\delta \phi^{(0)}},
\end{equation}
where $S_{\text{ren.}}$ is a renormalized action for the scalar field.\\
\indent In calculating the renormalized action, we need to perform the holographic renormalization procedure for the scalar field. The method is already well known. We first consider the asymptotically-AdS solution as $r\rightarrow \infty$ in the form of
\begin{equation}
    ds^2=-r^2dt^2+\frac{dr^2}{r^2}+r^2d\Omega^2,
\end{equation}
where we use the solution in eq. \eqref{solution1} in $r\rightarrow\infty$ limit and we consider $\Lambda=-3/l^2$ with $l$ be the AdS radius. We also set $l=1$ for simplicity. Using $z=\frac{1}{r}$, the solution becomes
\begin{equation}
    ds^2=\frac{1}{z^2}(-dt^2+dz^2+d\Omega^2),
\end{equation}
and now the asymptotic boundary is located at $z=0$. This is a $AdS_4$ spacetime in the Poincar\'e coordinates.\\ 
\indent In calculating holographic renormalization, the coordinate $\rho=z^2$ is preferred. In this case, the solution becomes
\begin{equation}
    ds^2=\frac{d\rho^2}{4\rho^2}+\frac{1}{\rho}(-dt^2+d\Omega^2).
\end{equation}
We want to solve the equation of motion for the scalar field $\phi$ near the asymptotic boundary where $\rho\rightarrow 0$. In this case, we solve the equation of motion perturbatively near $\rho =0$. The asymptotic solution that we are looking for has the form of
\begin{equation}
    \phi(\rho,t,\Omega)=\rho^{(3-\Delta)/2}\tilde{\varphi}(\rho,t,\Omega),
\end{equation}
where
\begin{equation}
\tilde{\varphi}(\rho,t,\Omega)=\phi^{(0)}+\rho\phi^{(2)}+\rho^2\phi^{(4)}+...\;.
\end{equation}
\indent After expressing all of the expansion modes $\phi^{(n)}$ with $n>0$ in terms of $\phi^{(0)}$, we can insert this solution back to the scalar field action and obtain the on-shell action. However, the on-shell action diverges in the boundary and needs to be renormalized by introducing a counterterm $S_{\text{ct}}$. We first regularized the on-shell action by inserting a cut-off $\varepsilon$ so that the asymptotic boundary is located at $\rho=\varepsilon$ and we obtain the regularized action $S_{\text{reg.}}$. The divergence terms in the $S_{\text{reg.}}$ is then cancelled by the counterterm $S_{\text{ct}}$. Finally, the renormalized action is obtained from
\begin{equation}
    S_{\text{ren.}}=\lim_{\varepsilon\rightarrow 0} (S_{\text{reg.}}+S_{\text{ct}}).
\end{equation}
\subsection{Asymptotic Scalar Equation of Motion}
The equation of motion for scalar field is initially given by eq. \eqref{eq5}. In the $(\rho,t,\Omega)$ coordinates, as $\rho\rightarrow 0$, the Torsion approaches constant value $T=-6$, which agrees with the anti-de Sitter solution with constant negative curvature. In this coordinate system, the equation of motion for the scalar field reads
\begin{equation}
    4(1+12\xi)\rho^{5/2}\partial_\rho(\rho^{-1/2}\partial_\rho\phi)+\rho(1+12\xi)\square\phi-m^2\phi=0,
\end{equation}
where we define
\begin{equation}
    \square\equiv -\frac{\partial^2}{\partial t^2}+\frac{1}{\sin\theta}\frac{\partial}{\partial\theta}\bigg(\sin\theta\frac{\partial}{\partial\theta}\bigg)+\frac{1}{\sin^2\theta}\frac{\partial^2}{\partial\varphi^2}.
\end{equation}
Furthermore, we are looking for a solution in the form of $\phi(\rho,t,\Omega)=\rho^{(3-\Delta)/2}\bar{\phi}(\rho,t,\Omega)$, following the standard holographic renormalization procedure in $AdS_4$ with
\begin{equation}
    \bar{\phi}(\rho,t,\Omega)=\phi^{(0)}(t,\Omega)+\rho\phi^{(2)}(t,\Omega)+\rho^2\phi^{(4)}+...\;.
\end{equation}
In this expression, the equation of motion becomes
\begin{align}
    \bar{\phi}&[\Delta(3-\Delta)(1+12\xi)+m^2]\\\nonumber
    &\;\;\;\;\;+\rho[2(1+12\xi)(2\Delta-5)\partial_\rho\bar{\phi}-(1+12\xi)\square\bar{\phi}-(1+12\xi)\rho\partial_\rho^2\bar{\phi}]=0.
\end{align}
\indent By setting $\rho=0$, we have
\begin{equation}
    \Delta(3-\Delta)(1+12\xi)+m^2=0.
\end{equation}
The positive solution of $\Delta$ gives us
\begin{equation}
    \Delta_\xi=\frac{3}{2}+\sqrt{\frac{9}{4}+m_{\text{eff}}(\xi)^2},
\end{equation}
where $m_{\text{eff}}(\xi)^2=\frac{m^2}{1+12\xi}$ is defined as the effective mass due to the presence of the non-minimal coupling constant $\xi$. Since $\xi$ can take negative value, the square of the effective mass $m_{\text{eff}}(\xi)^2$ can be negative even if we keep the mass of the bulk scalar field $\phi$ real ($m^2>0$). This is one of the main advantages of a theory with nonminimal derivative coupling.\\
\indent Near the asymptotic boundary, the scalar field solution can then be expected to behave as
\begin{equation}
    \phi(\rho,t,\Omega)=\rho^{(3-\Delta_\xi)/2}\phi^{(0)}(t,\Omega)+\rho^{\Delta_\xi/2}\phi^{(2\Delta_\xi-3)}(t,\Omega)+...\;.
\end{equation}
By ensuring $\lim_{\rho\rightarrow 0}\phi\rightarrow0$ in the boundary, we need $\Delta_\xi<3$. In this case, $m_{\text{eff}}(\xi)^2$ is required to be negative. Additionally, to keep $\Delta_\xi$ be real-valued and satisfy the Breitenlohner-Freedman bound, the range of $m_{\text{eff}}(\xi)^2$ is given by
\begin{equation}\label{massbound}
    -\frac{9}{4}<m_{\text{eff}}(\xi)^2<0.
\end{equation}
In a scalar theory without nonminimal derivative coupling $\xi$, we need to make the scalar field in the bulk slightly tachyonic with $m^2<0$. However, with an appropriate negative value of $\xi$, the mass of the bulk scalar field can still have a real and positive value.\\
\indent By plugging in the $\rho=0$ solution back to the equation of motion, we have
\begin{equation}
    2(2\Delta_\xi-5)\partial_\rho\bar{\phi}-\square\bar{\phi}-\rho\partial_\rho^2\bar{\phi}=0.
\end{equation}
This equation is identical to the standard scalar equation of motion in the holographic renormalization, with the conformal dimension $\Delta$ is replaced by $\Delta_\xi$. The recursion relation for $\phi^{(n)}$ is then given by
\begin{equation}
    \phi^{(2n)}=\frac{1}{2n(2\Delta_\xi-3-2n)}\square\phi^{(0)},\label{solutionscalarEOM}
\end{equation}
with $n\neq\Delta-3/2$.
\subsection{Holographic Renormalization and CFT 1-point Function}
From the solution in eq. \eqref{solutionscalarEOM}, the counterterm of the scalar action is given by \cite{deHaro2001}
\begin{equation}
    S_{\text{ct}}=(1+12\xi)\int d^3 x\sqrt{\gamma}\bigg(\frac{3-\Delta_\xi}{2}\phi^2+\frac{1}{2(2\Delta_\xi-3-2)}\phi\square_\gamma \phi+...\bigg).
\end{equation}
This is the standard counterterm for a massive scalar field $\phi$ in AdS$_4$/CFT$_3$, but with $\Delta$ replaced by $\Delta_\xi$. Furthermore, we also have an extra factor of $(1+12\xi)$ in front of the counterterm action. This extra factor also appears when we evaluate the bulk action at the asymptotic boundary. In this case, $\gamma$ is the induced metric in the boundary theory, which is given by a time coordinate $t$ and $S^2$ coordinates $\Omega$. This counterterm along with the original bulk action gives us the renormalized action of the theory. The 1-point function $\langle\mathcal{O}\rangle$ obtained by eq. \eqref{1pt1} is then given by
\begin{equation}\label{1pt2}
    \langle\mathcal{O}\rangle \sim -(1+12\xi)(2\Delta_\xi-3)\phi^{(2\Delta_\xi-3)}+C(\phi^{(0)}),
\end{equation}
where $C$ is a function of the boundary source term $\phi^{(0)}$. The first term of eq. \eqref{1pt2} is universal.\\
\indent From this result, we see that a four-dimensional scalar-torsion theory with a non-minimal derivative coupling is dual to a three-dimensional CFT $\mathcal{O}$ with a conformal dimension $\Delta_\xi$ that is controlled by the non-minimal derivative coupling $\xi$. The bulk massive scalar field $\phi$ behaves as a source $\phi^{(0)}$ to the scalar field $\mathcal{O}$ in the asymptotic boundary. \\
\indent The response $\langle\mathcal{O}\rangle$ of the source $\phi^{(0)}$ is given by $\phi^{(2\Delta_\xi-3)}$, up to a multiplicative factor that depends on the non-minimal derivative coupling. Due to the presence of $\xi$, this response can have different sign compared to $\xi=0$ case. To see this, when we require the effective mass $m_{\text{eff}}(\xi)^2$ to obey the bound in eq. \eqref{massbound} and assuming $m^2>0$, the value of $\xi$ needs to be negative. With $\Delta_\xi$ ranged between $\frac{3}{2}$ and $3$, the term $(2\Delta_\xi-3)$ is always positive. However, the term $(1+12\xi)$ must be negative since $\xi$ must be less than $-\frac{1}{12}$.

%\section{Renormalization (with alternative choice of coordinate)}
%Let us introduce the metric of the Graham-Fefferman coordinate system to separate the coordinate on the boundary from the bulk, as follows
%\begin{equation}
%    ds^2=\frac{l^2}{z^2}\left(dz^2+\gamma_{ij}(z,x)dx^idx^j\right).
%\end{equation}
%\(\gamma_{ij}(z,x)\) is assumed to be expandable as
%\begin{equation}
%    \gamma_{ij}(z,x)=\gamma^{(0)}_{ij}+z\gamma^{(1)}_{ij}+z^2\gamma^{(2)}_{ij}+\dots,
%\end{equation}
%where all \(\gamma^{(n)}_{ij}\)s are functions of \(x^i\)s. Using this coordinate, we have the corresponding vielbein which shall be used in the equation of motion, namely
%\begin{equation}
%    e^a{}_{\mu}=%(\sqrt{-\gamma_{00}}/z,l/z,\sqrt{\gamma_{ij}}e^{\bar{b}}{}_j/z)
%\end{equation}

%A similar expansion should also be considered for the scalar field, namely
%\begin{equation}
%    \phi=z^{3-\Delta}\left(\phi^{(0)}+z^2\phi^{(2)}+\dots\right)
%\end{equation}

\section{Conclusion}\label{sec:conclusion}
In this work we have studied the scalar-torsion gravity theory with non-minimal derivative coupling and cosmological constant in a 3+1-dimensional bulk spacetime. By focusing on static, spherically symmetric solutions, we derived a set of highly nonlinear equations governing the system, called the master equations, which describe the dynamics of the scalar-torsion system. We have proved both the local and global existence of solutions to the master equation using the Picard iteration method and the contraction mapping theorem. These results confirm that the theory admits well-defined static solutions under physically reasonable assumptions.

We have also studied the asymptotic behavior of these solutions, showing that the spacetime approaches anti-de Sitter geometry for negative cosmological constants, $\Lambda < 0$. This naturally motivated a holographic interpretation based on the AdS/CFT correspondence. In this context, we analysed how the non-minimal coupling parameter $\xi$ modifies the dual conformal field theory at the boundary, in particular influencing the conformal dimension and the expectation value of the dual scalar operator. Using a holographic renormalization procedure, we show that the presence of the torsion scalar leads to a modified bulk scalar field equation where the effective mass explicitly depends on the coupling parameter $\xi$.
As a result, the conformal dimension of the dual operator can be continuously tuned, even allowing real conformal dimensions with real bulk mass for $\xi<-1/12$, where the mass satisfies
\begin{equation}
    -\frac{9(1+12\xi)}{4}>m^2>0.
\end{equation}

Overall, our findings confirm that scalar-torsion gravity theories offer a consistent extension of General Relativity and demonstrate that holographic renormalization is also present in teleparallel gravity. Future directions of this research include a deeper investigation into the thermodynamic aspects of these solutions, particularly within the framework of holographic entanglement entropy and black hole entropy in torsion-based AdS spacetimes. However, it is crucial to confirm the existence of black hole solutions within this theory.  This will be further discussed in our future work.

\appendix
	\section*{Appendix}
\section{Construction of Torsion Theory of Gravity}
We begin by defining the torsion tensor on a 4-dimensional spacetime manifold $\mathcal{M}^4$, which takes the form
	\begin{equation}\label{torsiondef}
	{T^{\lambda}}_{\mu\nu} = {\omega^{\lambda}}_{\nu\mu}-{\omega^{\lambda}}_{\mu\nu} ~ ,
	\end{equation}  
where ${\omega^\lambda}_{\mu\nu}$ denotes an alternative connection with spacetime indices $\mu, \nu, \lambda \in \{0,1,2,3\}$. The connection can be expressed in terms of the vielbein fields $e^a$ and its duals $e_a$ as
\begin{equation}\label{altconnviel}
\omega^\lambda_{\ \mu\nu} = e_a^{\ \lambda} \partial_\nu e^a_{\ \mu},
\end{equation}
which defines the Weitzenb\"ock connection. From this expression, we obtain the torsion tensor:
\begin{equation}\label{torsionviel}
T^\lambda_{\ \mu\nu} = -e_a^{\ \lambda}(\partial_\mu e^a_{\ \nu} - \partial_\nu e^a_{\ \mu}).
\end{equation}
The vielbein fields satisfy the orthonormality conditions:
\begin{equation}\label{identviel}
e^a_{\ \mu}e^{\ \nu}_a = \delta^\nu_\mu, \quad e^a_{\ \mu}e^{\ \mu}_b = \delta^a_b,
\end{equation}
where $a,b = 0,1,2,3$ denote the flat (tangent space) indices, while $\mu,\nu$ represent the spacetime indices. Here, we have the metric tensor 
	\begin{equation}\label{tensmet}
	g_{\mu\nu}=\eta_{ab}{e^a}_\mu{e^{b}}_\nu ~ ,
	\end{equation}
	where $ \eta_{ab}=\text{diag}(-1,1,\dots,1) $ is the Minkowski metric in the Lorentz frame.
	The main measurement of geometric deformation in the theory is the torsion tensor, which substitutes the Riemann tensor. 

We further define the contorsion tensor $K^\lambda_{\ \mu\nu}$ that relates the alternative connection $\omega^\lambda_{\ \mu\nu}$ to the Levi-Civita connection $\Gamma^\lambda_{\ \mu\nu}$ through:
	\begin{eqnarray}\label{contorsion}
	\mathcal{K}_{\lambda\mu\nu}&=&\frac{1}{2}(T_{\nu\lambda\mu}-T_{\mu\nu\lambda}-T_{\lambda\mu\nu})\notag\\
	&=&\omega_{\lambda\mu\nu} - \Gamma_{\lambda\mu\nu} ~ ,
	\end{eqnarray}
	where $\Gamma_{\lambda\mu\nu} = g_{\lambda\alpha} {\Gamma^{\alpha}}_{\mu\nu} $ and $g_{\lambda\alpha} $ is the spacetime metric endowed on ${\mathcal M}^4$. Within the teleparallel framework, a new tensor \( S^{\mu\nu\lambda} \) is introduced, which is given by
\begin{equation}\label{tensorS}
S^{\mu\nu\lambda} = \frac{1}{2} \mathcal{K}^{\nu\lambda\mu} + \frac{1}{2} \left( g^{\mu\lambda} {T_{\rho}}^{\rho\nu} - g^{\mu\nu} {T_{\rho}}^{\rho\lambda} \right) = -S^{\mu\lambda\nu} ~ ,
\end{equation}
This antisymmetric tensor plays a crucial role in the construction of the torsion scalar, as formulated in \cite{Habib:2018}
	\begin{eqnarray}\label{torsionscalar}
	T&=&S^{\mu\nu\lambda}T_{\mu\nu\lambda}\notag\\
	&=&\frac{1}{4}T^{\mu\nu\lambda}T_{\mu\nu\lambda}+\frac{1}{2}T^{\mu\nu\lambda}T_{\lambda\nu\mu}-{T_\nu}^{\nu\mu}{T^{\lambda}}_\lambda\mu ~ .
	\end{eqnarray}
In the framework of teleparallel gravity, the connection is subject to an additional constraint: it must be free of curvature \cite{kofinas2012torsi},
\begin{eqnarray}
  {\bar{\mathcal{R}}^{\rho}}_{~ \sigma\mu\nu}=\partial_{\mu}\omega^{\rho}_{~ \nu\sigma}-\partial_{\nu}\omega^{\rho}_{~ \mu\sigma}+\omega^{\rho}_{~ \mu\lambda}\omega^{\lambda}_{~ \nu\sigma}-\omega^{\rho}_{~ \nu\lambda}\omega^{\lambda}_{~ \mu\sigma}=0 ~ .
\end{eqnarray}
However, it is crucial to notice that the curvature constructed from the Levi-Civita connection \( {\Gamma^{\lambda}}_{\mu\nu} \) remains non-vanishing.
\begin{eqnarray}\label{Riemanncurv}
{{\mathcal{R}}^{\rho}}_{\sigma\mu\nu}&=&\partial_{\mu}\Gamma^{\rho}_{~ \nu\sigma}-\partial_{\nu}\Gamma^{\rho}_{~ \mu\sigma}+\Gamma^{\rho}_{~ \mu\lambda}\Gamma^{\lambda}_{~ \nu\sigma}-\Gamma^{\rho}_{~ \nu\lambda}\Gamma^{\lambda}_{~ \mu\sigma} \notag\\
&=& \nabla_\nu{\mathcal{K}^{\rho}}_{\mu\sigma}-\nabla_\mu{\mathcal{K}^{\rho}}_{\nu\sigma}+{\mathcal{K}^{\rho}}_{\nu\lambda}{\mathcal{K}^{\lambda}}_{\mu\sigma}-{\mathcal{K}^{\rho}}_{\mu\lambda}{\mathcal{K}^{\lambda}}_{\nu\sigma} ~ ,
\end{eqnarray}
where \( \nabla_\nu \) denotes the covariant derivative with respect to the spacetime metric \eqref{tensmet}. Contracting the indices in \eqref{Riemanncurv} yields the Ricci tensor
\begin{eqnarray}\label{Riccitens}
{\mathcal{R}}_{\mu\nu} = {\mathcal{R}}^{\rho}_{~ \mu\rho\nu} = \nabla_\nu{\mathcal{K}^{\rho}}_{\rho\mu}-\nabla_\rho{\mathcal{K}^{\rho}}_{\nu\mu}+{\mathcal{K}^{\rho}}_{\nu\lambda}{\mathcal{K}^{\lambda}}_{\rho\mu}-{\mathcal{K}^{\rho}}_{\rho\lambda}{\mathcal{K}^{\lambda}}_{\nu\mu} ~ .
\end{eqnarray}
The corresponding Ricci scalar is given by
\begin{eqnarray}\label{Riccisclr}
{\mathcal{R}} = g^{\mu\nu}{{\mathcal{R}}}_{\mu\nu} = -T+2\nabla_\mu {T_\nu}^{\nu\mu} ~ ,
\end{eqnarray}
which shows that the Ricci scalar of the manifold can be recast as a combination of the torsion scalar \( T \) and a total divergence term involving the torsion tensor. Thus, in this formulation of gravity, the geometrical and dynamical properties of spacetime are governed by torsion rather than curvature.

\section{Local and Global Existence of Unique Solutions}\label{sec:localglobalex}
We define new variables as in \cite{kofinas2012torsi, Yaqin2017comment}
	\begin{equation}\label{Y}
	Y\equiv y^2 ~ ,
	\end{equation}
	\begin{equation}\label{Z}
	Z\equiv z^2~ ,
	\end{equation}
	where $ x,y,z $ 
	\begin{equation}\label{x}
	x=\ln R ~ ,
	\end{equation}
	\begin{equation}\label{y}
	y=\frac{\dot{R}}{R} ~ ,
	\end{equation}
	\begin{equation}\label{z}
	z=\frac{{(RN^2)}^{\cdot}}{RN^2} ~ ,
	\end{equation}
	with $\dot{f} \equiv df/d\phi$.
Then, \eqref{eq1}-\eqref{eq3} become
	%\footnotesize
	\begin{eqnarray}\label{toreq1}
	\frac{dY}{dx}+3Y-\frac{{{e}^{-2x}}}{{{\nu }^{2}}}+\frac{\kappa ^{2}\left( {{\nu }^{2}}+2\left(V+\frac{\Lambda}{\kappa}\right) \right)}{2{{\nu }^{2}}\left( 1-2\xi {{\nu }^{2}}\kappa^{2} \right)}=0 ~ ,
	\end{eqnarray}
	\begin{eqnarray}\label{toreq2}
\sqrt{YZ}-\frac{\left( 2\xi {{\nu }^{2}}\kappa ^{2}-1 \right){{e}^{-2x}}}{{{\nu }^{2}}\left( 6\xi {{\nu }^{2}}\kappa ^{2}-1 \right)}+\frac{\kappa^{2}\left( {{\nu }^{2}}+\left(V+\frac{\Lambda}{\kappa}\right) \right)}{{{\nu }^{2}}\left( 6\xi {{\nu }^{2}}\kappa ^{2}-1 \right)}=0~,
	\end{eqnarray}
	\begin{eqnarray}\label{toreq3}
\sqrt{\frac{Y}{Z}}\frac{dZ}{dx}+Z+\frac{{{e}^{-2x}}}{{{\nu }^{2}}}+\frac{3\kappa^{2}\left( {{\nu }^{2}}+2\left(V+\frac{\Lambda}{\kappa}\right) \right)}{2{{\nu }^{2}}\left( 1-2\xi {{\nu }^{2}}\kappa^{2} \right)}=0~.
	\end{eqnarray}
After some computations, we show that equations \eqref{toreq1}-\eqref{toreq3} can be reduced to a highly nonlinear ordinary differential equation, which we shall refer to as the master equation for $Y(x)$, 

\begin{eqnarray}
  && \frac{{{d}^{2}}Y}{d{{x}^{2}}}+Y\frac{dY}{dx}\left\{ -\frac{3}{2}-\frac{3\tilde{\eta }}{4\hat{\eta }}+\frac{6}{\tilde{\eta }\hat{\eta }} \right\}+\frac{dY}{dx}\frac{{{e}^{-2x}}}{{{\nu }^{2}}}\left\{ \frac{1}{4}-\frac{{\hat{\eta }}}{{\tilde{\eta }}} \right\}+{{\left( \frac{dY}{dx} \right)}^{2}}\left\{ -\frac{1}{2}+\frac{{\hat{\eta }}}{{\tilde{\eta }}} \right\} \notag\\ 
 && +3\frac{dY}{dx}+{{Y}^{2}}\left\{ -\frac{9\tilde{\eta }}{4\hat{\eta }}+\frac{9\hat{\eta }}{{\tilde{\eta }}} \right\}+\frac{{{e}^{-2x}}}{{{\nu }^{2}}}Y\left\{ \frac{{\tilde{\eta }}}{{\hat{\eta }}}-\frac{3{{\nu }^{2}}\hat{\eta }}{{\tilde{\eta }}} \right\}+\frac{{{e}^{-4x}}}{{{\nu }^{2}}}\frac{{\hat{\eta }}}{4\tilde{\eta }}+\frac{{{e}^{-2x}}}{{{\nu }^{2}}}=0 \notag\\
\end{eqnarray}
where $\hat{\eta }$ and $\tilde{\eta}$ are constants defined as:
\begin{eqnarray}
  \hat{\eta }&=& 2\xi {{\nu }^{2}}{{\kappa }^{2}}-1 ~,\notag\\ 
  \tilde{\eta }&=&6\xi {{\nu }^{2}}{{\kappa }^{2}}-1  ~.
\end{eqnarray}
This master equation \eqref{mastereq} describes the $4$-dimensional static spacetimes of scalar-torsion theories with non-minimal derivative coupling. Once equation  \eqref{mastereq} has been solved, the scalar potential and the scalar field profile, can be determined. Specifically, the potential \( V \) is obtained from:
\begin{eqnarray}
    V=\frac{{{\nu }^{2}}\left( 1-2\xi {{\nu }^{2}}{{\kappa }^{2}} \right)}{{{\kappa }^{2}}}\left( -\frac{dY}{dx}-3Y+\frac{{{e}^{-2x}}}{{{\nu }^{2}}} \right)-\frac{\Lambda }{\kappa }-\frac{{{\nu }^{2}}}{2}
\end{eqnarray}
and the scalar field \( \phi \) is obtained from
\begin{eqnarray}
    \left( \frac{dx}{d\phi} \right)^2 = Y(x).
\end{eqnarray}
The next question is whether or not the solution of the differential equation in \eqref{mastereq} exists. In the next section, we prove that there is indeed local and global existence of the master equation in \eqref{mastereq}.

%%%%%%%%%%%%%%%%%%%%%%%%%%%%%%%%

In this subsection we prove the local existence and uniqueness of the master equation using the Picard's iteration and the contraction mapping theorem. First, we define the dynamical quantity
	\begin{equation}
	{\bf{u}}   \equiv  \left( \begin{array}{c}
	Y  \\
	P_Y 
	\end{array} \right) ~ , \label{dynvar}
	\end{equation}
consider the interval \( I \equiv [x_0, x_0 + \varepsilon] \), where \( x_0 \in \mathbb{R} \) and \( \varepsilon \) is a small positive constant. Let \( P_Y \equiv \frac{dY}{dx} \) be defined on \( I \). Then, the master equation \eqref{mastereq} can be expressed as
	\begin{equation}
	\frac{d {\bf{u}} }{dx}  = \mathcal{J}({\bf{u}}, x)   \equiv  \left( \begin{array}{c}
	P_Y  \\
	J_Y
	\end{array} \right) ~ , \label{fungsiJ}
	\end{equation}
with
\begin{eqnarray}\label{JY}
  {{J}_{Y}}&=& Y{{P}_{Y}}\left\{ -\frac{3}{2}-\frac{3\tilde{\eta }}{4\hat{\eta }}+\frac{6}{\tilde{\eta }\hat{\eta }} \right\}+P_Y\frac{{{e}^{-2x}}}{{{\nu }^{2}}}\left\{ \frac{1}{4}-\frac{{\hat{\eta }}}{{\tilde{\eta }}} \right\}+P_{Y}^{2}\left\{ -\frac{1}{2}+\frac{{\hat{\eta }}}{{\tilde{\eta }}} \right\}+3{{P}_{Y}} \notag\\ 
 && +{{Y}^{2}}\left\{ -\frac{9\tilde{\eta }}{4\hat{\eta }}+\frac{9\hat{\eta }}{{\tilde{\eta }}} \right\}+\frac{{{e}^{-2x}}}{{{\nu }^{2}}}Y\left\{ \frac{{\tilde{\eta }}}{{\hat{\eta }}}-\frac{3{{\nu }^{2}}\hat{\eta }}{{\tilde{\eta }}} \right\}+\frac{{{e}^{-4x}}}{{{\nu }^{2}}}\frac{{\hat{\eta }}}{4\tilde{\eta }}+\frac{{{e}^{-2x}}}{{{\nu }^{2}}}  
\end{eqnarray}
\begin{lemma}\label{opJY}
Let $U \subset \mathbb{R}^2$ be an open set. The nonlinear operator $\mathcal{J}(\mathbf{u}, x)$ defined in \eqref{JY} is locally Lipschitz in $\mathbf{u}$ over $U$.
	\end{lemma}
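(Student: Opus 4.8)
The plan is to show that $\mathcal{J}(\mathbf{u},x)$ is continuously differentiable in $\mathbf{u}=(Y,P_Y)$ on the open set $U$, and then invoke the standard fact that a $C^1$ map is locally Lipschitz. Concretely, I would first observe that the first component of $\mathcal{J}$, namely $P_Y$, is linear in $\mathbf{u}$ and hence trivially smooth. The content is therefore entirely in the second component $J_Y$ given by \eqref{JY}.

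Next I would inspect the functional form of $J_Y$. For fixed $x\in I$, the quantities $e^{-2x}/\nu^2$ and $e^{-4x}/\nu^2$ are bounded constants (since $I=[x_0,x_0+\varepsilon]$ is compact), and the combinations $-\tfrac32-\tfrac{3\tilde\eta}{4\hat\eta}+\tfrac{6}{\tilde\eta\hat\eta}$, $\tfrac14-\tfrac{\hat\eta}{\tilde\eta}$, $-\tfrac12+\tfrac{\hat\eta}{\tilde\eta}$, $-\tfrac{9\tilde\eta}{4\hat\eta}+\tfrac{9\hat\eta}{\tilde\eta}$, $\tfrac{\tilde\eta}{\hat\eta}-3\nu^2\hat\eta/\tilde\eta$ are all finite real constants \emph{provided} $\hat\eta\neq0$ and $\tilde\eta\neq0$, i.e. provided $\xi\nu^2\kappa^2\notin\{\tfrac12,\tfrac16\}$, which I would state as a standing assumption. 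Under this assumption, $J_Y$ is visibly a polynomial in the variables $Y$ and $P_Y$ — of degree two — with coefficients that are continuous (indeed smooth) functions of $x$ on $I$. A polynomial in $\mathbf{u}$ with continuous coefficients is jointly continuous and its partial derivatives $\partial J_Y/\partial Y = P_Y\{-\tfrac32-\cdots\} + 2Y\{-\tfrac{9\tilde\eta}{4\hat\eta}+\cdots\} + \tfrac{e^{-2x}}{\nu^2}\{\tfrac{\tilde\eta}{\hat\eta}-\cdots\}$ and $\partial J_Y/\partial P_Y = Y\{-\tfrac32-\cdots\} + \tfrac{e^{-2x}}{\nu^2}\{\tfrac14-\cdots\} + 2P_Y\{-\tfrac12+\cdots\}+3$ are again polynomials in $\mathbf{u}$, hence continuous on $U\times I$.

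Having established that $\mathcal{J}\in C^1(U\times I)$, I would then conclude local Lipschitz continuity by a compactness argument: given any point $\mathbf{u}_*\in U$, choose a closed ball $\bar B(\mathbf{u}_*,\delta)\subset U$; on the compact set $\bar B(\mathbf{u}_*,\delta)\times I$ the continuous function $\|D_{\mathbf{u}}\mathcal{J}\|$ attains a finite maximum $L$, and the mean value inequality then gives $\|\mathcal{J}(\mathbf{u}_1,x)-\mathcal{J}(\mathbf{u}_2,x)\|\le L\,\|\mathbf{u}_1-\mathbf{u}_2\|$ for all $\mathbf{u}_1,\mathbf{u}_2$ in that ball and all $x\in I$. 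This is exactly the assertion of the lemma. Alternatively, one can avoid invoking differentiability and estimate directly: each monomial difference such as $Y_1P_{Y,1}-Y_2P_{Y,2}$ can be written $Y_1(P_{Y,1}-P_{Y,2})+(Y_1-Y_2)P_{Y,2}$ and bounded on a ball, and similarly $Y_1^2-Y_2^2=(Y_1+Y_2)(Y_1-Y_2)$ and $P_{Y,1}^2-P_{Y,2}^2=(P_{Y,1}+P_{Y,2})(P_{Y,1}-P_{Y,2})$; summing these with the bounded $x$-dependent coefficients yields the Lipschitz bound term by term.

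There is no serious obstacle here: the only point requiring care is the nondegeneracy $\hat\eta\tilde\eta\neq0$, which must be assumed so that the coefficients in \eqref{JY} are well-defined constants; once this is granted the result is immediate because $J_Y$ is simply a quadratic polynomial in $(Y,P_Y)$ with $x$-dependent coefficients that are bounded on the compact interval $I$, and such maps are automatically locally Lipschitz. The mildly tedious part — if one wants a self-contained estimate rather than citing the $C^1$ criterion — is bookkeeping the finitely many monomial differences, but this is routine.
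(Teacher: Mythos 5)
Your proposal is correct, and in fact contains the paper's own argument as your ``alternative'' route: the paper proves the lemma by exactly the direct monomial-difference estimate you sketch at the end, writing $Y P_Y - \hat Y \hat P_Y = Y(P_Y-\hat P_Y) + \hat P_Y(Y-\hat Y)$, $Y^2-\hat Y^2 = (Y+\hat Y)(Y-\hat Y)$, $P_Y^2-\hat P_Y^2 = (P_Y+\hat P_Y)(P_Y-\hat P_Y)$, and then absorbing the bounded $x$-dependent coefficients into a constant $C_{\mathcal J}(|\mathbf{u}|,|\hat{\mathbf{u}}|)$ depending on the norms of the two points. Your primary route --- observing that $J_Y$ is a quadratic polynomial in $(Y,P_Y)$ with coefficients continuous in $x$ on the compact interval $I$, so that $\mathcal J\in C^1$ in $\mathbf{u}$ and the mean value inequality on a closed ball gives the Lipschitz bound --- is a cleaner packaging of the same underlying fact and avoids the term-by-term bookkeeping; what it buys is a shorter, more standard argument, while the paper's explicit estimate makes the dependence of the Lipschitz constant on $|\mathbf{u}|,|\hat{\mathbf{u}}|$ visible, which is what gets reused in the Picard iteration. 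One genuine improvement in your write-up over the paper: you state explicitly the nondegeneracy condition $\hat\eta\neq 0$ and $\tilde\eta\neq 0$ (equivalently $\xi\nu^2\kappa^2\notin\{\tfrac12,\tfrac16\}$) needed for the coefficients in \eqref{JY} to be finite; the paper only remarks that $\hat\eta,\tilde\eta$ are ``smooth and constant'' and leaves this hypothesis implicit, and its displayed estimates also omit absolute values around coefficient combinations that could be negative. Neither omission is fatal, but your version is the more careful one.
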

\begin{proof}
We have the following estimate
\begin{eqnarray}\label{estJY}
   {{\left| {{J}_{Y}} \right|}_{U}}&\le& \left| Y{{P}_{Y}} \right|\left\{ \frac{3}{2}+\frac{3\tilde{\eta }}{4\hat{\eta }}+\frac{6}{\tilde{\eta }\hat{\eta }} \right\}+\left|P_Y\right|\frac{{{e}^{-2x}}}{{{\nu }^{2}}}\left\{ \frac{1}{4}+\frac{{\hat{\eta }}}{{\tilde{\eta }}} \right\}+\left| P_{Y}^{2} \right|\left\{ \frac{1}{2}+\frac{{\hat{\eta }}}{{\tilde{\eta }}} \right\}+3\left| {{P}_{Y}} \right| \notag\\ 
 &&+\left| {{Y}^{2}} \right|\left\{ \frac{9\tilde{\eta }}{4\hat{\eta }}+\frac{9\hat{\eta }}{{\tilde{\eta }}} \right\}+\frac{{{e}^{-2x}}}{{{\nu }^{2}}}\left| Y \right|\left\{ \frac{{\tilde{\eta }}}{{\hat{\eta }}}+\frac{3{{\nu }^{2}}\hat{\eta }}{{\tilde{\eta }}} \right\}+\frac{{{e}^{-4x}}}{{{\nu }^{2}}}\frac{{\hat{\eta }}}{4\tilde{\eta }}+\frac{{{e}^{-2x}}}{{{\nu }^{2}}}  	
\end{eqnarray} 
Since \( Y \) belongs to at least a \( C^2 \)-class of real functions, it is bounded on any closed interval \( I \). The nonlinear operator \( |\mathcal{J}(\mathbf{u}, x)|_U \) is bounded on \( U \), as both \( \hat{\eta} \) and \( \tilde{\eta} \) are smooth and constant. Furthermore, for any \( \mathbf{u}, \hat{\mathbf{u}} \in U \), the following holds:
\begin{eqnarray}\label{estJYLps}
{{\left| \mathcal{J}\left( u,x \right)-\mathcal{J}\left( \hat{u},x \right) \right|}_{U}}&\le & \left( Y\left| {{P}_{Y}}-{{{\hat{P}}}_{Y}} \right|+{{{\hat{P}}}_{Y}}\left| Y-\hat{Y} \right| \right)\left\{ \frac{3}{2}+\frac{3\tilde{\eta }}{4\hat{\eta }}+\frac{6}{\tilde{\eta }\hat{\eta }} \right\}+3\left| {{P}_{Y}}-{{{\hat{P}}}_{Y}} \right| \notag\\ 
 && +\left( {{P}_{Y}}+{{{\hat{P}}}_{Y}} \right)\left| {{P}_{Y}}-{{{\hat{P}}}_{Y}} \right|\left\{ \frac{1}{2}+\frac{{\hat{\eta }}}{{\tilde{\eta }}} \right\}+\left|P_Y-\tilde{P}_Y\right|\frac{{{e}^{-2x}}}{{{\nu }^{2}}}\left\{ \frac{1}{4}+\frac{{\hat{\eta }}}{{\tilde{\eta }}} \right\}\notag\\
 &&+\left( Y+\hat{Y} \right)\left| Y-\hat{Y} \right|\left\{ \frac{9\tilde{\eta }}{4\hat{\eta }}+\frac{9\hat{\eta }}{{\tilde{\eta }}} \right\} \notag\\ 
 && +\left| Y-\hat{Y} \right|\frac{{{e}^{-2x}}}{{{\nu }^{2}}}\left\{ \frac{{\tilde{\eta }}}{{\hat{\eta }}}+\frac{3{{\nu }^{2}}\hat{\eta }}{{\tilde{\eta }}} \right\}  ~.
\end{eqnarray} 
We can then establish the inequality
		\begin{equation}
		\left|  \mathcal{J}({\bf u}, x) -  \mathcal{J}( \hat{{\bf u}}, x) \right|_U  \le C_{ \mathcal{J}}(|\bf{u}|, |\hat{\bf{u}}|) | \bf{u} - \hat{\bf{u}}| ~ .  \label{localLipshitzcon}
		\end{equation}
which demonstrates that the operator $\mathcal{J}$ satisfies a local Lipschitz condition in ${\bf u}$. This proves the local existence of the master equation \eqref{mastereq}.
\end{proof}

In this subsection, we prove the existence of a regular global solution to \eqref{fungsiJ} defined on an interval $I \subset \mathbb{R}$. The equation can be expressed in integral form as
\begin{equation}
\mathbf{u}(x) = \mathbf{u}(x_0) + \int_{x_0}^{x} \mathcal{J}\big(\mathbf{u}(s), s\big) \, ds,
\label{IntegralEquation}
\end{equation}
which remains well-defined and finite for all $x \in I$.

To verify this claim, we introduce an extended interval $I^+ \equiv I^+_L \cup I^+_A$, where:
\begin{itemize}
    \item $I^+_L \equiv [x_0, L]$ is a finite subinterval, and
    \item $I^+_A \equiv (L, +\infty)$ represents the asymptotic regime,
\end{itemize}
with $L > |x_0|$ chosen sufficiently large. For any $\tilde{C} \in I^+_A$, we then consider\eqref{IntegralEquation}  as
\begin{equation}
{\bf{u} }(\tilde{C}) = {\bf{u} }(x_0) + \int_{x_0}^{L}\:\mathcal{J}\left( {\bf{u} }(s), s \right) ~ ds ~ +  \int_{L}^{\tilde{C}} ~ \mathcal{J}\left( {\bf{u} }(s), s \right) ~ ds . \label{IntegralEquation1}
\end{equation}
In the asymptotic region $I^+_A$, we have $P_Y = dY/dx \to 0$ and $Y \to Y_0$, causing $\mathcal{J}(\mathbf{u}(x), x) \to 0$ and ensuring the boundedness of the third term in \eqref{IntegralEquation1}. Since $Y$ is at least $C^2$-smooth, the integral equation \eqref{IntegralEquation} remains finite and well-defined on all of $I^+$. We similarly construct the negative interval $I^- = I^-_L \cup I^-_A$, where $I^-_L \equiv [-L, x_0]$ and $I^-_A \equiv (-\infty, -L)$, with $L > |x_0|$ chosen sufficiently large. In this case, analogous behavior emerges where
\begin{equation}
{\bf{u} }(-\tilde{C}) = {\bf{u} }(x_0) + \int_{-L}^{x_0}\:\mathcal{J}\left( {\bf{u} }(s), s \right) ~ ds ~ +  \int_{-\tilde{C}}^{-L} ~ \mathcal{J}\left( {\bf{u} }(s), s \right) ~ ds . \label{IntegralEquation2}
\end{equation}
Consider the asymptotic behavior on $I^-_A$ where $P_Y = dY/dx \to 0$ and $Y$ approaches a positive constant $Y_c$. In the limit $x \to -\infty$, the operator $J_Y$ diverges to infinity, which leads to the divergence of \eqref{IntegralEquation2}. Consequently, \eqref{IntegralEquation} admits no regular solution on the interval $I^-$. 
\begin{theorem}\label{theorem}
Consider the master equation \eqref{fungsiJ} defined on the interval $I_{x_0} \equiv [x_0, +\infty)$. For any finite $x_0 \in \mathbb{R}$, there exist well-defined solutions to \eqref{fungsiJ}.
\end{theorem}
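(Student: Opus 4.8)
The plan is to combine the local result of Lemma~\ref{opJY} with the standard continuation principle for ordinary differential equations. Since $\mathcal{J}(\mathbf{u},x)$ is locally Lipschitz in $\mathbf{u}$, Picard iteration on the integral form \eqref{IntegralEquation} yields a unique solution on a maximal interval $[x_0,T_{\max})$, and the theorem reduces to proving $T_{\max}=+\infty$. By the blow-up alternative this is equivalent to exhibiting an a priori bound on $|\mathbf{u}(x)|$ over every compact subinterval of $[x_0,+\infty)$, which I would obtain by treating the compact piece $I^+_L=[x_0,L]$ and the asymptotic tail $I^+_A=(L,+\infty)$ separately, with $L>|x_0|$ fixed.

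On $I^+_L$ I would bound $|\mathbf{u}|$ starting from the estimate \eqref{estJY}: the constants $\hat\eta,\tilde\eta$ are fixed and nonzero, and the forcing terms $e^{-2x}/\nu^2$, $e^{-4x}/\nu^2$ are uniformly bounded on $[x_0,L]$, so $|J_Y|$ is majorised by a fixed polynomial in $(|Y|,|P_Y|)$ with coefficients continuous on $I^+_L$. Converting this into a differential inequality for $|\mathbf{u}|^2$ and invoking the geometric origin of the variable $Y=y^2\ge 0$ — which is tied through \eqref{x}--\eqref{z} to the metric functions $R,N$ of the regular ansatz \eqref{metrik}, hence bounded on compacta — gives finiteness of $\int_{x_0}^{L}\mathcal{J}(\mathbf{u}(s),s)\,ds$ and therefore of $\mathbf{u}(L)$.

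On the tail $I^+_A$ I would study the limiting autonomous system obtained as $x\to+\infty$, in which the exponentially small terms $e^{-2x},e^{-4x}$ drop out. The point is that the trajectory constructed above, being bounded, must approach the equilibrium set of this limiting system, so that $P_Y=dY/dx\to 0$ and $Y\to Y_0$ for some constant $Y_0$, whence $\mathcal{J}(\mathbf{u}(x),x)\to 0$; moreover the decay is exponential, inherited from the $e^{-2x}$ forcing and from the hyperbolic part of the linearisation about the equilibrium, so the tail integral $\int_{L}^{+\infty}\mathcal{J}(\mathbf{u}(s),s)\,ds$ converges absolutely. Combining the two pieces, $|\mathbf{u}(x)|$ stays finite for all $x\in[x_0,+\infty)$, so $T_{\max}=+\infty$ and \eqref{fungsiJ} admits a global solution on $I_{x_0}$; uniqueness on $I_{x_0}$ then follows from the local Lipschitz property by the usual patching argument. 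The restriction to the forward interval is essential: as observed just before the theorem, on $I^-_A$ the operator $J_Y$ diverges as $x\to-\infty$ and no regular solution exists there.

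The step I expect to be the main obstacle is the a priori bound on the compact piece $[x_0,L]$. The nonlinearity $J_Y$ contains genuinely quadratic contributions $YP_Y$, $P_Y^2$ and $Y^2$, which in the absence of further structure permit finite-$x$ blow-up, so a bare Gr\"onwall estimate does not close the argument. Making it rigorous requires exploiting the specific sign pattern of the bracketed coefficients $\{-\tfrac32-\tfrac{3\tilde\eta}{4\hat\eta}+\tfrac{6}{\tilde\eta\hat\eta}\}$, $\{-\tfrac12+\tfrac{\hat\eta}{\tilde\eta}\}$, $\{-\tfrac{9\tilde\eta}{4\hat\eta}+\tfrac{9\hat\eta}{\tilde\eta}\}$ (equivalently, the positivity $Y\ge 0$ together with the regularity of the underlying metric), so as to reach a differential inequality $\frac{d}{dx}|\mathbf{u}|^2\le a(x)+b(x)|\mathbf{u}|^2$ with locally integrable $a,b$ on $[x_0,\infty)$. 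A secondary but also nontrivial point is justifying the precise exponential decay rates invoked for the tail, which in principle needs a phase-plane or centre-manifold analysis near the asymptotic equilibrium.
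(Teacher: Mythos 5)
Your overall route is the same as the paper's: local existence from the Lipschitz estimate of Lemma~\ref{opJY} via Picard iteration, followed by a decomposition of $[x_0,+\infty)$ into a compact piece $I^+_L=[x_0,L]$ and an asymptotic tail $I^+_A=(L,+\infty)$, with convergence of the tail integral in \eqref{IntegralEquation1} argued from $\mathcal{J}(\mathbf{u}(x),x)\to 0$ as $x\to+\infty$, and the exclusion of $I^-$ argued from the divergence of $J_Y$ as $x\to-\infty$. So in structure you have reproduced the paper's argument faithfully.

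The obstacle you single out --- an a priori bound on $|\mathbf{u}|$ over compact subintervals that rules out finite-$x$ blow-up driven by the quadratic terms $YP_Y$, $P_Y^2$, $Y^2$ in \eqref{JY} --- is a genuine gap, and you should be aware that the paper does not close it either. The paper's justification is the assertion that ``$Y$ belongs to at least a $C^2$-class of real functions, hence is bounded on any closed interval,'' together with the assumed asymptotics $P_Y\to 0$, $Y\to Y_0$ on $I^+_A$. Both statements presuppose that a globally defined regular solution already exists; they are properties of the solution one is trying to construct, not consequences of \eqref{mastereq}. The estimate \eqref{estJY} only gives a polynomial majorant in $(|Y|,|P_Y|)$, which, as you correctly note, is compatible with blow-up in finite $x$, so Gr\"onwall alone cannot upgrade the local solution to a global one. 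Your proposed remedies (exploiting the signs of the bracketed coefficients, the constraint $Y=y^2\ge 0$, or a phase-plane analysis of the limiting autonomous system to get exponential approach to equilibrium) are the right kind of ingredient, but none of them is carried out, and the appeal to ``regularity of the underlying metric'' is circular in exactly the same way as the paper's appeal to smoothness of $Y$. In short: your proposal matches the published argument step for step, and the step you flag as incomplete is precisely the step that is also unproven in the paper; a genuinely complete proof of Theorem~\ref{theorem} would require supplying the a priori bound (or a Lyapunov-type monotone quantity) that neither you nor the paper currently provides.
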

\end{document}